\documentclass[12pt,draftcls,onecolumn]{IEEEtran}
\usepackage{amsmath,amssymb,amsthm}    
\usepackage[dvips]{graphicx}   
\usepackage{verbatim}   
\usepackage{color}      
\usepackage{subfigure}  
\usepackage{epsfig}
\usepackage{float}
\usepackage{algorithm}
\usepackage{algorithmic}
\usepackage{setspace}
\usepackage{subfigure}
\usepackage{hyperref} 
\usepackage{amssymb}
\newtheorem{lemma}{\bf Lemma}
\newtheorem{theorem}{\bf Theorem}
\newtheorem{corollary}{\bf Corollary}


\DeclareMathOperator*{\argmin}{arg\,min}

\makeatletter
\newcommand*{\rom}[1]{\expandafter\@slowromancap\romannumeral #1@}
\makeatother

\newcommand{\vx}{{\bf x}}
\newcommand{\vy}{{\bf y}}
\newcommand{\vw}{{\bf w}}

\newcommand{\ms}{{\bf S}}
\newcommand{\mr}{{\bf R}}
\newcommand{\mi}{{\bf I}}
\newcommand{\mc}{{\bf C}}
\newcommand{\md}{{\bf D}}
\newcommand{\mt}{{\bf T}}
\newcommand{\ma}{{\bf A}}
\newcommand{\mb}{{\bf B}}
\newcommand{\mz}{{\bf Z}}
\newcommand{\mg}{{\bf G}}

\newcommand{\tmr}{\widetilde{\bf R}}
\DeclareMathOperator{\rank}{rk}
\DeclareMathOperator{\spn}{span}
\DeclareMathOperator{\dmn}{dim}

\begin{document}

\title{\vspace{-0.7cm}Subspace-Aware Index Codes}
\author{ 
Bhavya~Kailkhura$^{*\dagger}$,~\IEEEmembership{Member,~IEEE}, Lakshmi~Narasimhan~Theagarajan$^{*\ddagger}$,~\IEEEmembership{Member,~IEEE},
Pramod~K.~Varshney$^\ddagger$,~\IEEEmembership{Fellow,~IEEE}
\thanks{This work was supported in part by NSF Grant no. ECCS 1609916.}
\thanks{This work was performed under the auspices of the U.S. Department of Energy by Lawrence Livermore National Laboratory under Contract DE-AC52-07NA27344. LLNL-JRNL-718227}
\thanks{*These authors contributed equally to this work.}
\thanks{$\dagger$ This author is presently affiliated to Lawrence Livermore National Laboratory, kailkhura1{@}llnl.gov}
\thanks{$\ddagger$These authors are at the Department of EECS, Syracuse University, New York, \{ltheagar, varshney\}{@}syr.edu.}
\vspace{-1cm}
}
\date{}
\maketitle

\begin{abstract} 
In this paper, we generalize the well-known index coding problem to exploit 
the structure in the source-data to improve system throughput. In many 
applications (e.g., multimedia), the data to be transmitted may lie (or can be well approximated) in a low-dimensional 
subspace. We exploit this low-dimensional structure of the data using an algebraic 
framework to solve the index coding problem (referred to as {\em 
subspace-aware index coding}) as opposed to the traditional index coding 
problem which is {\em subspace-unaware}. Also, we propose an efficient 
algorithm based on the alternating minimization approach to obtain near 
optimal index codes for both subspace-aware and -unaware cases. Our 
simulations indicate that under certain conditions, a significant throughput 
gain (about $90\%$) can be achieved by subspace-aware index codes over 
conventional subspace-unaware index codes.
\end{abstract}

\vspace{-2mm}
{\em {\bfseries Keywords}} -- 
{\footnotesize {\em \small 
Index coding, coded side-information, low-dimensional data, alternating minimization
}}

\vspace{-4mm}
\section{Introduction}
\label{sec1}

\begin{figure*}[t]
    \centering
    \subfigure[A set of eigenfaces or eigenvectors.]{
        \includegraphics[width=0.4\textwidth]{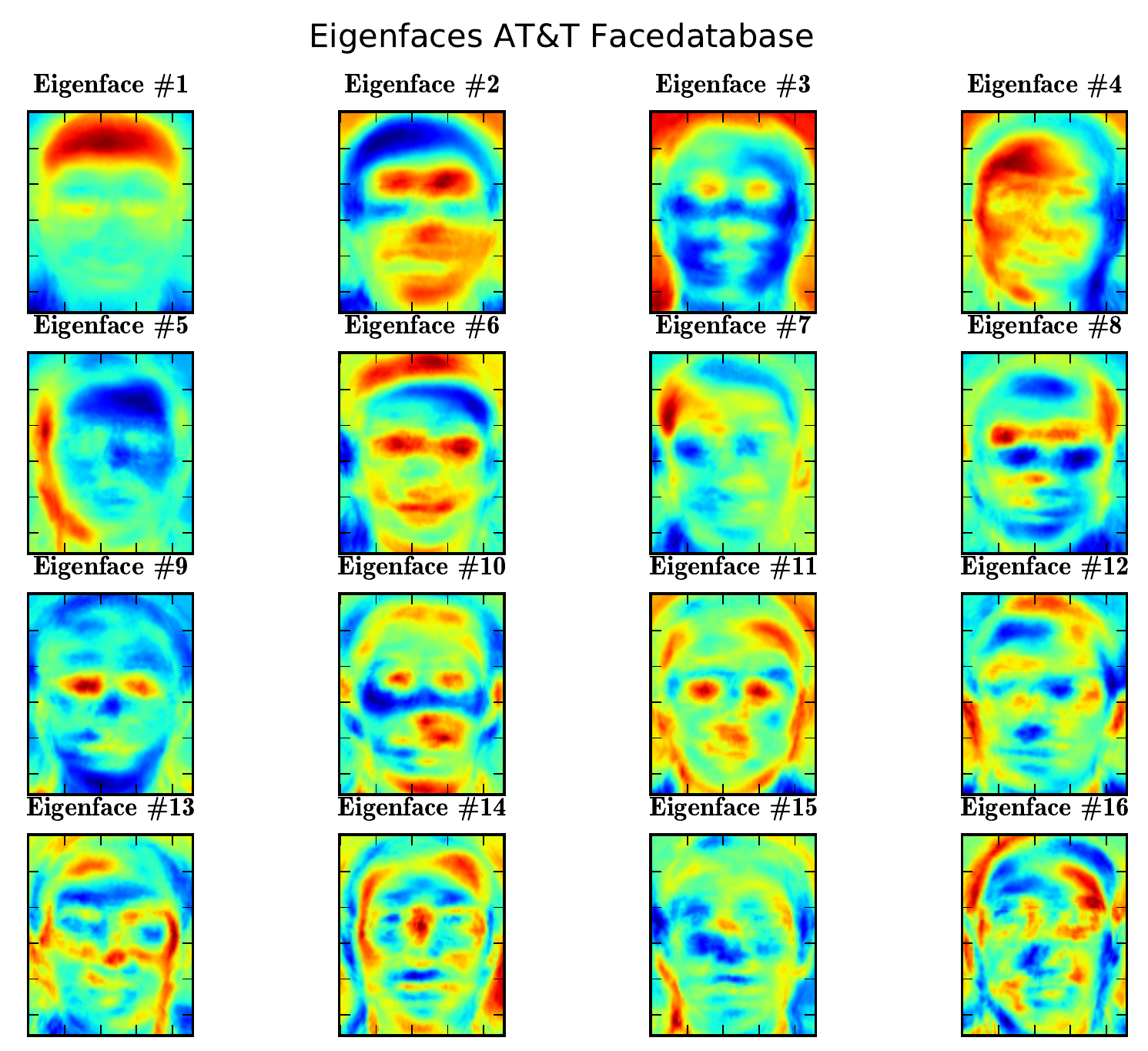}
        \label{fig:eigenface} } \hspace{4mm}
    \subfigure[A face can be approximated as a linear combination of the best $K$ eigenvectors.]{
        \includegraphics[width=0.45\textwidth]{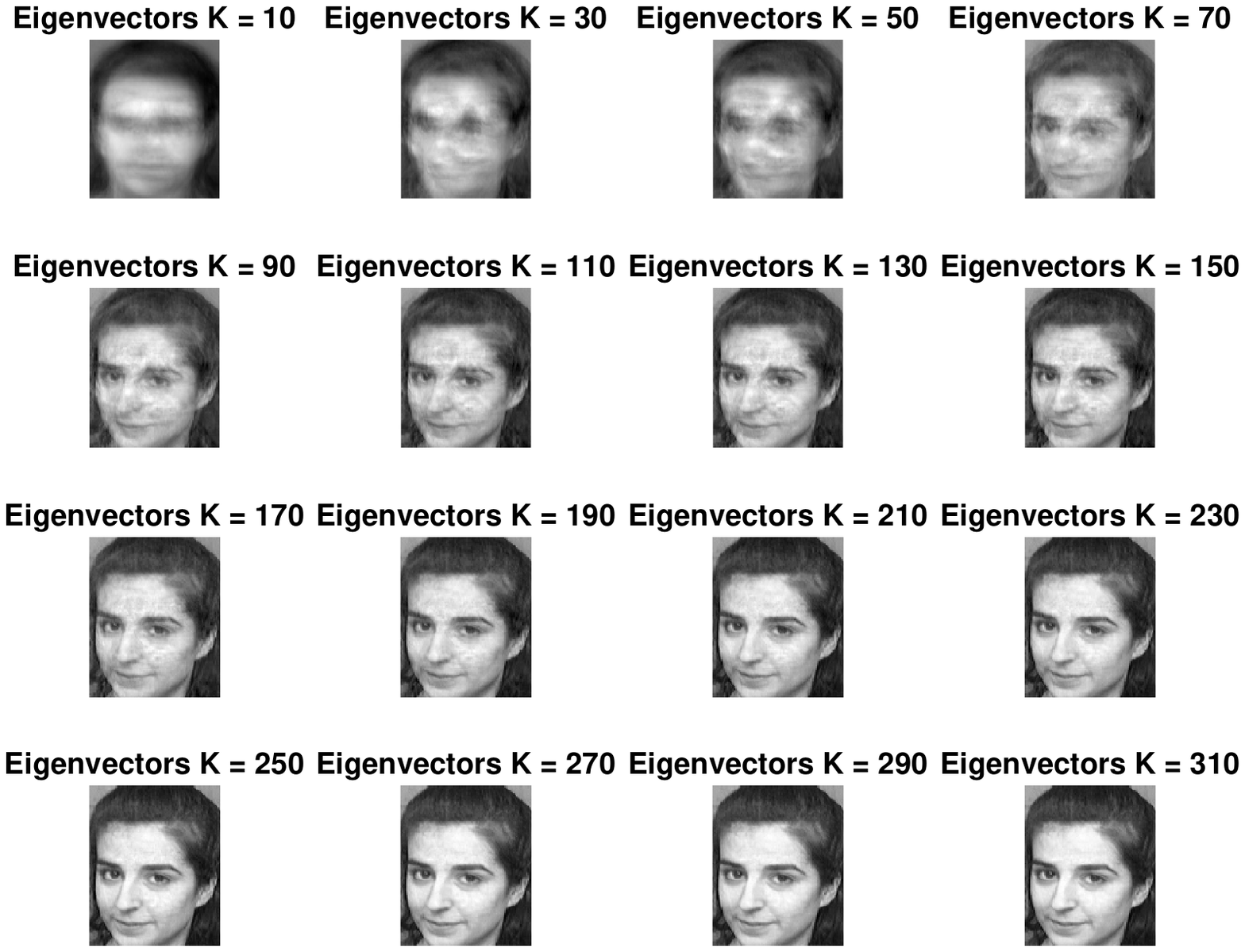}
        \label{fig:recon} }        
    \caption{Illustration of practical scenarios where data lies in a low dimensional linear subspace.}
\end{figure*}

Index coding with side-information (ICSI)~\cite{Birk,Bar, cache}, is a problem, where 
a server has $N$ stored messages that it can broadcast over a noiseless channel 
to a set of receivers or clients. Each client has a subset of the $N$ messages 
as side information, and requests a subset of messages that it needs from the 
server. The objective of the ICSI problem is to devise an optimal coding strategy 
that minimizes the number of broadcast transmissions made by the server to satisfy 
the requirements of all the clients. The optimality criterion of an index code is its 
code length. The optimal index code length, i.e., the minimum number of transmissions 
required from the server for successful recovery of the desired information at the 
clients, was first characterized in \cite{Birk, Bar} as the minimum rank of a matrix 
that represents the side-information graph~\cite{Bar}. A method to construct index 
codes by solving a matrix completion problem was presented in \cite{Tan}. In 
many practical scenarios, users possess coded side information (CSI) \cite{Shum}; and 
index codes for linear CSI were studied in \cite{Shum, dimakis}. Index codes over real 
field and their construction methods are investigated in \cite{iitc, real2}. 
Considering index codes over the real field enables the use of efficient optimization
techniques to construct near optimal index codes.
Further, it was shown in \cite{Effros} that network codes are equivalent to index codes. 
Thus, one can construct an optimal network code by constructing an optimal index code 
for the equivalent problem. Network codes over real field are discussed in 
\cite{ncreal1, ncreal2}. 

{The source-data encountered in many practical systems such as data caching, 
images, video streaming, big-data storage and processing, can be well approximated 
using a lower dimensional linear subspace~\cite{ss1, ss2, ss3} (and references therein). 
Motivated by such applications, we propose a technique to construct index codes when 
the source-data belongs to a lower dimensional linear subspace.} For example, it is well known that a facial image is
a point from a high-dimensional image space which can be well approximated in a lower-dimensional linear subspace. The lower-dimensional subspace is found using Principal Component Analysis, which identifies the axes with maximum variance. In Figure~\ref{fig:eigenface}, a set of eigenvectors (known as eigenfaces) are shown for AT\&T Facedatabase. There are ten different images of each of the 40 distinct subjects.  The size of each image is 92x112 pixels, with 256 grey levels per pixel. In Figure~\ref{fig:recon}, we can see that a good reconstruction quality can be obtained using a very small number of eigenfaces.\footnote{These results are obtained from \text{https://github.com/bytefish/facerec}.}

More specifically, 
the main contributions of this paper can be summarized as follows.
\begin{itemize}
\item We generalize the index coding problem with coded (and/or uncoded) side 
information to exploit the low-dimensional structure that may be present in the source-data. 
\item We establish bounds on the gain achieved by subspace-aware index codes over 
subspace-unaware case.
\item We consider the design of subspace-aware/unaware index codes with coded/uncoded 
side information in a unified optimization framework and develop an efficient algorithm 
to construct near optimal index codes.
\item Finally, we provide theoretical guarantees and simulation results on the 
performance of the proposed techniques.
\end{itemize}
The notations followed in the rest of this paper are: $\rank(.)$ denotes the 
rank of a matrix, $\spn(.)$ denotes a vector space spanned by a set of vectors, 
$(.)^\dagger$ denotes the pseudo-inverse of a matrix, and $\|.\|_F$ denotes the 
Frobenius norm of a matrix.

\vspace{-4mm}
\section{Problem setup}
\label{sec2}
\vspace{-2mm}
Consider a network with $U$ users and a data source (DS). Let $N$ denote the 
total number of data packets involved in a transmission instance, $P$ denote 
the size of each data packet, $\vx_i$ denote the data in the $i$th packet, 
$\vx_i\in\mathbb{R}^P$ for $i=1,2,\cdots,N$, and  
$\vx\triangleq[\vx_1, \vx_2, \cdots, \vx_N]^T\in\mathbb{R}^{PN}$. 
The $j$th user requests $V_j$ number of packets from the DS, $\mathcal{R}_j$ 
is the set of all indices of the requested data packets by the $j$th user, 
$|\mathcal{R}_j|=V_j$ for $j=1,2,\cdots,U$, and $\vx_{\mathcal{R}_j}$ denotes
the $PV_j\times 1$ information vector requested by the $j$th user.
Each user possesses a linearly coded side information. Let $M_j$ denote 
the length of the CSI and $\ms_j\in\mathbb{R}^{PM_j\times PN}$ denote the side 
information coding matrix for the user $j$, $0\leq M_j <N$. The CSI of the 
$j$th user is given by the vector $\ms_j\vx$. When the $j$th user has uncoded
side information (USI), the side information consists of $M_j$ data packets, and
the non-zero columns of $\ms_j$ form an identity matrix of dimension $PM_j\times PM_j$.

If the vector $\vx$ belongs to a low-dimensional subspace, then $\vx=\mt\vw$, 
where $\mt\in\mathbb{R}^{PN\times PD}$ ($1\leq D<N$) is the matrix of basis vectors
of the low-dimensional subspace, $\vw\in\mathbb{R}^{PD}$, and $\rank(\mt)=PD$.

\textbf{Goal}: Knowing $\mathcal{R}_j$, matrices $\ms_j$ and subspace structure 
$\mathbf{T}$ for $j=1,2,\cdots,U$, the goal is to have the DS broadcast the least number of coded 
data packets to  $U$ users such that each user is able to successfully decode the 
requested packets. \qedsymbol

Let $\vy\triangleq[\vy_1, \vy_2, \cdots, \vy_L]^T\in\mathbb{R}^{PL}$ be the
data vector transmitted by the DS. Now, each user needs to decode 
$\vx_{\mathcal{R}_j}\in\mathbb{R}^{PV_j}$ from $[(\ms_j\vx)^T\, \vy^T]^T$.
Assuming linear decoding, the $j$th user performs the decoding as 
$\widehat{\vx}_{\mathcal{R}_j}=\md_j[(\ms_j\vx)^T\, \vy^T]^T$, where $\md_j$ is the 
decoding matrix. For linear encoding, this problem can be stated as follows.

\textbf{Problem}: Find a matrix $\mc\in\mathbb{R}^{PL\times PN}$ such that 
\begin{equation}
\md_j\begin{bmatrix}\ms_j \mathbf{Tw}\\ \vy\end{bmatrix} = \vx_{\mathcal{R}_j}, \, \forall j, \mbox{ s.t. } \vy=\mc\vx=\mc\mt\vw.
\label{eqn:cond}
\end{equation}

We refer to the matrix $\mc$ as the $L$-length index code. For a given 
$\{\mathcal{R}_j\}$, $\{\ms_j\}$, and $\mt$, the matrix $\mc$ with the least 
number of rows $L^*$ satisfying the condition in (\ref{eqn:cond}) is the optimal 
index code and $L^*$ is the optimal index code length\footnote{Note that, in our
proposed methodology, compression and index coding are performed in a unified framework. 
This helps to further simplify the receiver by relieving it of the separate decompression 
algorithm, reduces computational complexity, and improves overall system throughput.}. 

\begin{figure*}[t]
    \centering
    \subfigure[A wireless relay network.]{
        \includegraphics[width=0.3\textwidth]{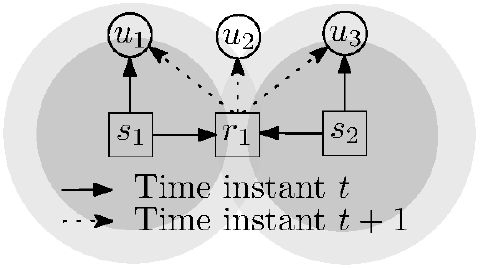}
        \label{fig:relay} } \hspace{4mm}
    \subfigure[A collaborative cognitive radio network.]{
        \includegraphics[width=0.3\textwidth]{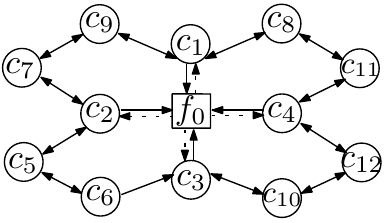}
        \label{fig:sensors} }  \hspace{4mm}
    \subfigure[Users connected to datacenters.]{
        \includegraphics[width=0.25\textwidth, height=0.17\textwidth]{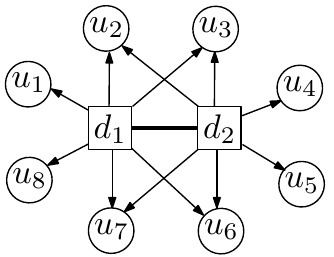}
        \label{fig:datacenter} }        
    \caption{Illustration of practical scenarios where index codes can be employed to improve efficiency and throughput.}
\end{figure*}

\textbf{Examples}: The problem described above is often encountered in practical 
scenarios such as cloud networks, multicast video-streaming and content-sharing. 
Since the users are connected to multiple datacenters, each user may have different 
subsets of the same data and require different subsets. The data could be low-dimensional 
due to its inherent nature (e.g., videos, images, and sensory data \cite{ss1,ss2}) 
or the usage of redundancy-inducing error correcting codes \cite{icml}. Here, 
the datacenters employ index codes to serve the users' requests to increase network 
efficiency and throughput. A similar problem is also encountered in distributed computing 
setups~\cite{Lee}, distributed cognitive radio networks and satellite networks. Next, we describe these applications in more detail. 

($1$) Consider a wireless relay network represented in 
Fig. \ref{fig:relay}. Source nodes $s_1$ and $s_2$ wish to broadcast their 
data to user nodes $u_1$, $u_2$, and $u_3$ with the help of a relay node $r_1$. 
The intensity signals transmitted by $s_1$ and $s_2$ decay with distance and the SNR 
deteriorates to the extent that their signals are not decodable beyond a 
certain radius of transmission. Consequently, in a transmission time slot $t$, 
nodes $u_1$ and $r_1$ successfully decode the data from $s_1$, while nodes 
$u_3$ and $r_1$ successfully decode the data from $s_2$, and the node $u_3$ 
receives a linear combination of the data from $s_1$ and $s_2$. In time slot 
$t+1$, the relay node broadcasts a coded combination of data from $s_1$ and 
$s_2$ with which the nodes $u_1$, $u_2$, and $u_3$ are able to decode the data
from both $s_1$ and $s_2$. Here, the relay node comes up with an index 
code such that $\mathcal{R}_1=\{2\}, \mathcal{R}_2=\{1,2\}, \mathcal{R}_3=\{1\}$,
$\ms_1=[1, 0]$ (USI), $\ms_2=[h_1, h_2]$ (CSI), $\ms_3=[0, 1]$ (USI), where 
$h_1$ and $h_2$ are the linear coefficients at $u_2$. When the transmitted data 
is encoded with the same linear channel code at $s_1$ and $s_2$, the 
transmitted data belongs to a low dimensional subspace. The columns of 
$\mt$ are the bases of this subspace created by the linear channel code.  

($2$) Consider a collaborative cognitive radio (CR) network of mutiple 
low-cost CR $c_i$ where $i=1,2,\cdots,12$, as illustrated in Fig. 
\ref{fig:sensors}. Each CR senses a disjoint band of a wide spectrum and 
broadcasts the information to all its nearest neighbors. The data collected 
by all the CRs are finally fused at the fusion centers (FC) $f_0$. The FC forms 
the complete map of the wideband spectrum. This complete map has to be 
conveyed back to the CRs. Since the CR network is power constrained, the FC 
conveys this information to its neighbors in least number of broadcasts using 
an index code. Further, as the low-cost CRs have limited memory, each CR stores 
only a linear combination of all the data it receives. Further, the CRs 
$c_1, c_2, c_3$, and $c_4$ develop index codes to broadcast to their neighbors. 
At the FC, $\mathcal{R}_1=\{1, 8, 9\}^c$, $\mathcal{R}_2=\{2, 5, 7\}^c$, 
$\mathcal{R}_3=\{3, 6, 10\}^c$, $\mathcal{R}_4=\{4, 11, 12\}^c$,
\[ \ms_{1[1,8,9]}=\begin{bmatrix}1&0&0\\0&h_8&h_9\end{bmatrix}\hspace{-1mm}, 
\ms_{2[2,5,7]}=\begin{bmatrix}1&0&0\\0&h_5&h_7\end{bmatrix}\hspace{-1mm}, \]
\[ \ms_{3[3,6,10]}=\begin{bmatrix}1&0&0\\0&h_6&h_{10}\end{bmatrix}\hspace{-1mm}, 
\text{ and } \ms_{4[4,11,12]}=\begin{bmatrix}1&0&0\\0&h_{11}&h_{12}\end{bmatrix}
\hspace{-1mm}.\] Due to the inherent sparsity in the wideband spectrum 
activity, the spectrum data is low dimensional in nature \cite{spectrum}. 

($3$) Consider a network of devices served by a central server of facial image
databases, with each device requesting a few facial images while possessing images
of other faces. Such a scenario commonly occurs in biometric verification systems and 
security monitoring applications. As described in the previous section, it is 
known that the facial image data of hundreds of pixels in dimension belong to a 
lower dimensional linear subspace \cite{sbs1}. Therefore, a subspace-aware index
coding in this scenario will improve throughput, speed and scalability of the 
network.

($4$) Consider a network of users connected to a cloud of datacenters hosting 
a common dataset. One such cloud network in illustrated in Fig. \ref{fig:datacenter}. 
The online users could be simultaneously performing operations such as 
document-editing or video-streaming or file-sharing. Since the users are 
connected to multiple datacenters, each user may contain different subsets of 
the same data and require different subsets. The data could belong to a 
low-dimensional linear subspace due to either its inherent nature (e.g., videos
\cite{sbs2}) or the usage of redundancy-inducing error correcting codes. For 
multicast transmissions, the data centers employ index codes to serve the users' 
requests. This increases the overall network efficiency and throughput.

\vspace{-2mm}
\section{Optimal Index Code length}
\vspace{-1mm}
An important step towards solving the problem stated in Sec. \ref{sec2} is to 
identify the minimum length of the index code. Without loss of generality, we 
assume $P=1$. Let $\mr_j$ be a $V_j\times N$ matrix such that 
$\mr_{j}\vx=\vx_{\mathcal{R}_j}$. 
Splitting $\md_j$ into sub-matrices $\ma_j\in\mathbb{R}^{V_j\times M_j}$ 
and $\mb_j\in\mathbb{R}^{V_j\times L}$, we can write (\ref{eqn:cond}) as
\vspace{-3mm}
\begin{eqnarray}
\md_j\begin{bmatrix}\ms_j\mt\vw\\ \vy\end{bmatrix} &\hspace{-3mm}=& \hspace{-3mm}
\begin{bmatrix}\ma_j&\mb_j\end{bmatrix}\begin{bmatrix}\ms_j\mt\vw\\ \vy\end{bmatrix}
=(\ma_j\ms_j + \mb_j\mc)\mt\vw\nonumber\\
\vx_{\mathcal{R}_j}&\hspace{-3mm}=&\hspace{-2mm}\mr_j\vx, \quad \forall j. \label{eqn:base}
\end{eqnarray}

\vspace{-1mm}
Since $\vx=\mt\vw$ and $\vw$ can be any arbitrary vector in $\mathbb{R}^{D}$,
from (\ref{eqn:cond}) and (\ref{eqn:base}), we can write 
$\mb_j\mc\mt=(\mr_j-\ma_j\ms_j)\mt$, $\forall j$. This can be expressed 
succinctly as 
\begin{equation}
\mb\mc\mt=(\mr-\ma\ms)\mt=\widetilde{\mr}\mt,
\label{eqn:system}
\end{equation}
where 
\vspace{-6.25mm}
\begin{eqnarray}
\mb&\triangleq&[\mb_1^T,\mb_2^T,\cdots,\mb_U^T]^T\in\mathbb{R}^{(\sum_jV_j)\times L},\nonumber\\
\ms&\triangleq&[\ms_1^T,\ms_2^T,\cdots,\ms_U^T]^T\in\mathbb{R}^{(\sum_jM_j)\times N},\nonumber\\
\ma&\triangleq&\mathrm{diag}([\ma_1, \ma_2, \cdots, \ma_U]),\nonumber\\
\mr&\triangleq&[\mr_1^T,\mr_2^T,\cdots,\mr_U^T]^T,\nonumber\\
\widetilde{\mr}&\triangleq&\mr-\ma\ms\in\mathbb{R}^{\sum_jV_j\times D}.\nonumber\\
\md_j&\triangleq&[\ma_j\hspace{3mm} \mb_j]
\end{eqnarray}

\vspace{-1mm}
Now, the optimal index code is the matrix $\mc$ that satisfies (\ref{eqn:system}) 
and has the least value of $L (>0)$. Since $\mc$ has only linearly independent
rows, the rank of $\mc$ is $L$. Therefore, the goal is to minimize $\rank(\mc)$ 
such that (\ref{eqn:system}) is satisfied.

{\em Note}: When index coding is performed without the knowledge of the 
underlying subspace (we refer to this scenario as the \textit{subspace-unaware} 
case) or when the data is not low-dimensional, we have 
$\mt=\mi$.

\vspace{-0mm}
\begin{lemma}
$\rank(\mc\mt)=\rank(\tmr\mt)$.
\label{lemma:1}
\end{lemma}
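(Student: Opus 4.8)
\emph{Proof strategy.} I would prove $\rank(\mc\mt)=\rank(\tmr\mt)$ as two inequalities. The inequality $\rank(\tmr\mt)\le\rank(\mc\mt)$ is immediate: reading (\ref{eqn:system}) as $\tmr\mt=\mb(\mc\mt)$ shows that every row of $\tmr\mt$ is a linear combination of the rows of $\mc\mt$, so the row space of $\tmr\mt$ lies inside that of $\mc\mt$ and the dimensions compare accordingly. The reverse inequality $\rank(\mc\mt)\le\rank(\tmr\mt)$ is the substantive part, and this is where the optimality of $\mc$ must be used.

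For the reverse inequality, keep the decoding sub-matrix $\ma$ (hence $\tmr=\mr-\ma\ms$) fixed, set $r=\rank(\tmr\mt)$, and take a full-rank factorization $\tmr\mt=\mathbf{E}\mathbf{F}$ with $\mathbf{F}$ of full row rank $r$. Since $\mt$ has full column rank, $\mt^\dagger\mt=\mi$, so putting $\mc'=\mathbf{F}\mt^\dagger$, $\mb'=\mathbf{E}$ (and $\ma'=\ma$) gives $\mb'\mc'\mt=\mathbf{E}\mathbf{F}\mt^\dagger\mt=\mathbf{E}\mathbf{F}=\tmr\mt$, so $(\ma',\mb',\mc')$ is feasible for (\ref{eqn:system}). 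Moreover $\rank(\mc')=r$, since $\rank(\mathbf{F}\mt^\dagger)\le r$ trivially while $\rank(\mathbf{F}\mt^\dagger)\ge\rank(\mathbf{F}\mt^\dagger\mt)=\rank(\mathbf{F})=r$. As $\mc$ has minimum rank among feasible index codes, $\rank(\mc)\le r$; combined with $\rank(\mc\mt)\le\rank(\mc)$ this gives $\rank(\mc\mt)\le r=\rank(\tmr\mt)$. Together with the first inequality this proves the lemma (and, incidentally, shows $\rank(\mc)=\rank(\mc\mt)$ at the optimum).

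The step I expect to be the main obstacle is precisely this reverse inequality. The stated equality fails for a generic feasible $\mc$ — one can always append to $\mc$ extra linearly independent rows that are annihilated by $\mt$, which increases $\rank(\mc\mt)$ while keeping (\ref{eqn:system}) satisfied — so the argument must genuinely invoke the minimality of $\rank(\mc)$ and must exhibit an explicit feasible index code of rank exactly $\rank(\tmr\mt)$. The single structural fact that makes this construction go through is $\mt^\dagger\mt=\mi$, i.e. precisely the hypothesis that $\mt$ has full column rank; one also needs the routine rank bookkeeping for the full-rank factorization and the observation that replacing $\mb$ by $\mathbf{E}$ is admissible, since $\mb$ is free apart from its partition into the blocks $\mb_j$ (only $\ma$ is required to be block diagonal). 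The degenerate case $\tmr\mt=\mathbf{0}$, where the side information already suffices and $L=0$, is covered by the usual convention that allows the empty transmission.
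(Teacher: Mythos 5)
Your proof is correct, but it takes a genuinely different route from the paper's. The paper first argues that $\mb$ must have full column rank $L$ (on the grounds that successful decoding forces $\rank(\mb)\geq\dmn(\mc\vx)=L$), and then sandwiches $\rank(\mb\mc\mt)$ between $\rank(\mc\mt)$ from above (rank subadditivity of products) and from below (Sylvester's rank inequality applied to $\mb\cdot\mc\mt$), so that $\rank(\tmr\mt)=\rank(\mb\mc\mt)=\rank(\mc\mt)$. You instead get the easy direction $\rank(\tmr\mt)\leq\rank(\mc\mt)$ directly from $\tmr\mt=\mb(\mc\mt)$, and for the hard direction you invoke the minimality of $\rank(\mc)$ explicitly, exhibiting the competitor $\mc'=\mathbf{F}\mt^{\dagger}$ from a full-rank factorization $\tmr\mt=\mathbf{E}\mathbf{F}$. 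Your competitor is in fact a sharpening of the one the paper already uses (it constructs $\mc=(\mt^T\mt)^{-1}\mt^T$ of rank $D$ only to conclude $L\leq D$), so your argument is very much in the spirit of the paper while being more self-contained: it avoids the paper's somewhat informal step $\rank(\mb)=L$ (whose justification via ``$\mc\vx\in\spn(\mb)$'' does not typecheck cleanly and itself tacitly appeals to non-wastefulness of the code), it isolates exactly where full column rank of $\mt$ is needed ($\mt^{\dagger}\mt=\mi$), and it yields $\rank(\mc)=\rank(\mc\mt)$ at the optimum as a byproduct, a fact the paper only asserts separately under a full-row-rank hypothesis. One small slip in your commentary: appending to $\mc$ rows annihilated by $\mt$ does not increase $\rank(\mc\mt)$ (it adds zero rows to $\mc\mt$); it increases $\rank(\mc)$, so that example defeats $\rank(\mc)=\rank(\mc\mt)$ rather than the lemma. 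To defeat the lemma for a non-minimal code you would append a row $\mathbf{c}$ with $\mathbf{c}\mt$ outside the row space of $\mc\mt$ and pad $\mb$ with a zero column. This does not affect your proof, whose broader point stands: the statement genuinely requires optimality of $\mc$.
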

\begin{proof}
If $\sum_jV_j<L$, then the index code length is larger than the number of data packets 
required. Therefore, $\sum_jV_j\geq L$; hence, $\rank(\mb)\leq L$. As governed by 
(\ref{eqn:base}) and (\ref{eqn:system}), when the decoding is successful at the receivers, 
$\mc\vx\in\spn(\mb)$; hence, $\rank(\mb)\geq \dmn(\mc\vx)=L$.
This proves that $\rank(\mb)=L$.

Note that by choosing the index code as $\mc=(\mt^T\mt)^{-1}\mt^T$ (i.e., $L=D$), and 
the decoder matrices as $\mb=\mr\mt$ and $\ma={\bf 0}$, all the required packets can be 
trivially decoded at the receivers. Therefore, the index code is optimal only when 
$L\leq D$. Now, $\rank(\mc\mt)\leq\min(L,D)=L$, and we have
\begin{equation} 
\rank(\mb\mc\mt)\leq\min(\rank(\mb),\rank(\mc\mt))=\rank(\mc\mt).
\label{eqn:l1}
\end{equation}
Further, by Sylvester's rank inequality, 
\begin{equation} 
\rank(\mb\mc\mt)\geq\rank(\mb)+\rank(\mc\mt)-L=\rank(\mc\mt).
\label{eqn:l2}
\end{equation}
From (\ref{eqn:system}), (\ref{eqn:l1}) and (\ref{eqn:l2}), 
$\rank(\mb\mc\mt)=\rank(\tmr\mt)=\rank(\mc\mt)$.
\end{proof}

\vspace{-1mm}\hspace{-4.5mm}
Now, from Sylvester's rank inequality, we get
\begin{equation} 
\rank(\mc)\leq\rank(\mc\mt)+N-D=\rank(\tmr\mt)+N-D.
\label{eqn:upb}
\end{equation}
Since, $N-D$ is a fixed positive value, minimizing $\rank(\tmr\mt)$ 
minimizes the upperbound on $\rank(\mc)$, thereby reducing $\rank(\mc)$.
We use this approach of minimizing $\rank(\tmr\mt)$ to construct index 
codes for low-dimensional data. Further, when $\mc\mt$ has full row-rank
(i.e., $\rank(\mc\mt)=L$), we have $\rank(\mc\mt)=\rank(\mc)$.
For subspace-unaware case, we have $\rank(\mc)=\rank(\tmr)$ \cite{dimakis}.

\vspace{-3mm}
\subsection{Throughput Gain}
The length of the optimal subspace-aware index codes is defined as the following

\vspace{-2mm}
{\small
\begin{equation}
L^*=\min_{\ma_1, \ma_2, \cdots, \ma_U} \rank(\tmr\mt).
\label{eqn:optim}
\end{equation}}
Next, we characterize the throughput gain obtained using subspace-aware 
index codes.

\vspace{-1mm}
\begin{theorem}
\label{thm1}
The length of the optimal linear index code obtained for the subspace-aware case is  
less than or equal to the length of the optimal linear index code obtained for the 
subspace-unaware case.
\end{theorem}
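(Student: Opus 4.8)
The plan is to reduce Theorem~\ref{thm1} to an elementary rank inequality applied pointwise over the auxiliary decoder matrices $\{\ma_j\}$. First I would record the two quantities being compared. By (\ref{eqn:optim}), the optimal subspace-aware length is $L^*_{\mathrm{aware}}=\min_{\ma_1,\dots,\ma_U}\rank(\tmr\mt)$, where $\tmr=\mr-\ma\ms$ and $\ma=\mathrm{diag}([\ma_1,\dots,\ma_U])$. For the subspace-unaware case the same derivation with $\mt=\mi$ (the Note following (\ref{eqn:upb})), together with the identity $\rank(\mc)=\rank(\tmr)$ recalled there, gives $L^*_{\mathrm{unaware}}=\min_{\ma_1,\dots,\ma_U}\rank(\tmr)$. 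The essential observation is that both are minimizations of a rank functional over the \emph{same} family $\{\ma_j\}$, with $\tmr$ the identical matrix-valued function of $\{\ma_j\}$ in both; only the trailing factor $\mt$ is present in one and absent in the other.

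Next I would invoke the fact that right-multiplication by $\mt$ cannot increase rank: for any fixed $\{\ma_j\}$, each column of $\tmr\mt$ is a linear combination of columns of $\tmr$, so $\rank(\tmr\mt)\le\min(\rank(\tmr),\rank(\mt))\le\rank(\tmr)$. Letting $\ma^\star=\mathrm{diag}([\ma_1^\star,\dots,\ma_U^\star])$ attain the minimum in the subspace-unaware problem and writing $\tmr^\star=\mr-\ma^\star\ms$, this yields
\[
L^*_{\mathrm{aware}}=\min_{\ma_1,\dots,\ma_U}\rank(\tmr\mt)\ \le\ \rank(\tmr^\star\mt)\ \le\ \rank(\tmr^\star)\ =\ L^*_{\mathrm{unaware}},
\]
which is exactly the claim.

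I do not anticipate a genuine mathematical obstacle here; the only care needed is in the bookkeeping that makes the reduction legitimate. One must verify that Lemma~\ref{lemma:1} and (\ref{eqn:optim}) do identify $L^*$ in the subspace-aware setting with $\min_{\ma}\rank(\tmr\mt)$ (in particular that an optimal $\mc$ with $\rank(\mc)=\rank(\mc\mt)$ is attainable), that the subspace-unaware problem is precisely the $\mt=\mi$ specialization of the same framework, and that the feasible set for $\{\ma_j\}$ is unchanged between the two problems (it is, since (\ref{eqn:system}) imposes no a priori restriction on $\ma$, any choice being completable to a valid decoder by suitable $\mb$ and $\mc$). It is also worth noting that the inequality is strict exactly when an unaware-optimal residual $\tmr^\star$ has $\rank(\tmr^\star\mt)<\rank(\tmr^\star)$ — i.e., when the low-dimensional structure genuinely lowers the effective rank — and quantifying that strict gain is the aim of the throughput-gain analysis that follows.
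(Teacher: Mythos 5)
Your proposal is correct and follows essentially the same route as the paper's proof: evaluate the subspace-aware minimization at the optimizer $\widetilde{\ma}$ of the subspace-unaware problem, then apply the product rank inequality $\rank(\tmr\mt)\leq\min(\rank(\tmr),\rank(\mt))\leq\rank(\tmr)$. The additional bookkeeping you flag (same feasible set for $\{\ma_j\}$, the $\mt=\mi$ specialization) is sound but not spelled out in the paper, which states the chain of inequalities directly.
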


\vspace{-1mm}
\hspace{-3.5mm}{\em Proof.}
Let $\tilde{L}\triangleq\min_{\ma}\rank(\tmr)$ be the optimal subspace-unaware linear 
index code length, and $\widetilde{\ma}\triangleq\argmin_{\ma}\rank(\tmr)$. Now,

\vspace{-5mm}
\begin{eqnarray}
L^*=\min_{\ma} \rank((\mr-\ma\ms)\mt) &\leq& \rank((\mr-\widetilde{\ma}\ms)\mt) \nonumber\\
& \leq& \min(\rank(\mr-\widetilde{\ma}\ms),\rank(\mt)) \nonumber\\
&\leq& \rank(\mr-\widetilde{\ma}\ms) = \tilde{L}. \quad\qedsymbol \nonumber
\end{eqnarray}

\vspace{-3mm}
\begin{corollary}
\label{boundperf} 
The length of the optimal linear index code obtained in the subspace-aware case can be 
bounded as
\begin{equation*}
\min(\tilde{L}-(N-D),1) \leq L^* \leq \tilde{L}
\end{equation*}
\end{corollary}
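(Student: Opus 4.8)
The upper bound is immediate: it is exactly the content of Theorem~\ref{thm1}, namely $L^*\le\tilde L$. So the plan is to concentrate entirely on the lower bound $\min(\tilde L-(N-D),1)\le L^*$, and the engine for that will again be Sylvester's rank inequality, just as in the derivation of (\ref{eqn:upb}).

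Concretely, I would start from $L^*=\min_{\ma}\rank(\tmr\mt)$ in (\ref{eqn:optim}), fix a minimizer $\ma^\star\in\argmin_{\ma}\rank(\tmr\mt)$, and apply Sylvester's rank inequality to the product of the $(\sum_jV_j)\times N$ matrix $\mr-\ma^\star\ms$ with the $N\times D$ matrix $\mt$ (recall $P=1$ and $\rank(\mt)=D$):
\[
L^*=\rank\big((\mr-\ma^\star\ms)\mt\big)\ \ge\ \rank(\mr-\ma^\star\ms)+\rank(\mt)-N\ \ge\ \tilde L+D-N,
\]
where the last step uses that $\tilde L\triangleq\min_{\ma}\rank(\mr-\ma\ms)$ — the subspace-unaware optimum from the proof of Theorem~\ref{thm1}, obtained by setting $\mt=\mi$ — is a minimum over \emph{all} $\ma$ and hence $\tilde L\le\rank(\mr-\ma^\star\ms)$. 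This yields $L^*\ge\tilde L-(N-D)$; since $\min(\tilde L-(N-D),1)\le\tilde L-(N-D)$ always, the desired lower bound follows, and combining it with Theorem~\ref{thm1} gives the corollary.

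It is worth spelling out the two regimes the minimum encodes, since that is really the only point where a reader might stumble. When $\tilde L-(N-D)\le 1$ the bound reads $L^*\ge\tilde L-(N-D)$, i.e.\ subspace awareness can shave off at most $N-D$ transmissions. When $\tilde L-(N-D)>1$ the bound reads $L^*\ge 1$: at least one broadcast is still required — which is also what forces $L^*\ge1$ in any non-degenerate instance, because $L^*=0$ would mean $(\mr-\ma^\star\ms)\mt=0$, pushing every row of $\mr-\ma^\star\ms$ into the $(N-D)$-dimensional orthogonal complement of the column span of $\mt$ and hence $\tilde L\le N-D$, contradicting the case hypothesis. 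I do not expect a genuine obstacle here beyond bookkeeping: getting the inner dimension in Sylvester's inequality right ($N$, not $D$ or $\sum_jV_j$), keeping the identification $\tilde L=\min_{\ma}\rank(\mr-\ma\ms)$ straight, and checking that the degenerate case $L^*=0$ does not violate the stated inequality. The corollary is essentially a repackaging of (\ref{eqn:upb}) together with the nonnegativity and nontriviality of the index-code length.
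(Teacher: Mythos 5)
Your proposal is correct and follows essentially the same route as the paper: the upper bound is Theorem~\ref{thm1}, and the lower bound comes from Sylvester's rank inequality applied to $(\mr-\ma\ms)\mt$ together with the fact that $\tilde L$ minimizes $\rank(\mr-\ma\ms)$ over all $\ma$ (the paper states the inequality ``for any $\ma$'' rather than fixing the minimizer, but the argument is identical). Your additional discussion of the two regimes of the $\min$ and the degenerate case is harmless but not needed, since $L^*\ge\tilde L-(N-D)$ already implies the stated bound.
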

\begin{proof}
By Sylvester's rank inequality, for any matrix $\ma$,
\begin{equation}
\rank(\tmr\mt)\geq \rank(\tmr)+D-N \geq \tilde{L}-(N-D).
\label{eqn:cor1}
\end{equation}
The proof follows from (\ref{eqn:cor1}) and Theorem \ref{thm1}.
\end{proof}
 
\vspace{-3mm}
\section{Construction of Subspace-Aware Index Codes}
It is well-known that the optimization problem in \eqref{eqn:optim} is 
NP-hard. In order to solve (\ref{eqn:optim}), we make a practical assumption that 
the users can tolerate a decoding error of at most $\epsilon$. That is,

\vspace{-4mm}
{\small
\begin{equation}
\Big\|\md_j\begin{bmatrix}\ms_j\vx\\ \vy\end{bmatrix} - \vx_{\mathcal{R}_j}\Big\|\leq\epsilon, \, \forall j.
\end{equation}}
Note that, subspace-unaware case with USI can be seen as special cases of~\eqref{eqn:optim}. 
Index codes over real field for this case has been studied previously in the literature~\cite{iitc}. 
It is known that a subspace-unaware linear index code matrix can be obtained by solving a 
matrix completion problem~\cite{dimakis,iitc}. However, the optimization problem in 
\eqref{eqn:optim} is more challenging compared to the conventional matrix completion problems. 
This is due to the fact that an indeterminate element in $\ma$ affects multiple entries 
in the resultant $\tmr\mt$ matrix in~\eqref{eqn:optim}, which is not the case in 
conventional matrix completion problems. In the next subsection, we consider the 
design of subspace-aware/unaware index codes with CSI/USI in a 
unified optimization framework.

\vspace{-3mm}
\subsection{Construction Algorithm for Index Codes}
\vspace{-0mm}
Let $\mathbf{Z}\triangleq[\mathbf{Z}_1^T,\cdots,\mathbf{Z}_U^T]^T$ be a rank $r$ matrix
and ${\bf Z}_j\in\mathbb{R}^{V_j\times D}$. Now, the optimization problem can be formulated as

\vspace{-6mm}
\begin{equation}
\label{opt1}\hspace{-1mm}
\min_{\{\mathbf{Z}_j,\mathbf{A}_j\}_{j=1}^U} \sum\limits_{j=1}^{U}\|\mathbf{Z}_j-(\mr_j-\ma_j\ms_j)\mt\|_F^2
=\hspace{-1mm}
\min_{\mathbf{Z},\{\mathbf{A}_j\}_{j=1}^U} \|\mathbf{Z}-\widetilde{\mr}\mt\|_F^2.
\end{equation}

\vspace{-1mm}
We solve the optimization problem in \eqref{opt1} for a range of values of $r$ and choose the minimum value of $r$ for which the optimization was feasible 
(i.e., all the constraints were satisfied) as the length of the index code ($L^*$).

\vspace{-2mm}
\begin{algorithm} 
\small                      
\caption{Subspace-aware Index Code Design}         
\label{altmin}                           
\begin{algorithmic} [1]
\REQUIRE $\{\mathbf{R}_j\}_{j=1}^U,\{\mathbf{S}_j\}_{j=1}^U,\mathbf{T},\epsilon,t_{\text{max}}$
\STATE \textbf{Initialize:} Random initial values\\ 
$t=0;\mathbf{X}\leftarrow \mathbf{X}(0)$; $\mathbf{Y}\leftarrow \mathbf{Y}(0)$; $\mathbf{A}_j=\mathbf{A}_j(0),\;\forall j$ 
\STATE \textbf{Update} $\mathbf{\{A_j\}}_{j=1}^U$\\ 
\textit{Solve:}  $\min_{\{\mathbf{A}_j\}_{j=1}^U} \sum\limits_{j=1}^{U}\|\mathbf{Z}_j(t)-(\mr_j-\ma_j\ms_j)\mt\|_F^2$  \\
\quad\qquad\qquad$\mathbf{A}_j(t+1) = (\mr_j\mt-\mathbf{Z}_j(t))(\ms_j\mt)^\dagger$
\STATE \textbf{Update} $\mathbf{X}$\\ 
\textit{Solve:}  $\min_{\mathbf{X}}\|\mathbf{XY(t)}-\widetilde{\mr}(t+1)\mt\|_F^2$  \\
\quad\qquad\qquad$\mathbf{X}(t+1) = \widetilde{\mr}(t+1)\mt \mathbf{Y}(t)^\dagger$
\STATE \textbf{Update} $\mathbf{Y}$\\ 
\textit{Solve:}  $\min_{\mathbf{Y}}\|\mathbf{X(t+1)Y}-\widetilde{\mr}(t+1)\mt\|_F^2$  \\
\quad\qquad\qquad$\mathbf{Y}(t+1) = \mathbf{X}(t+1)^\dagger\widetilde{\mr}(t+1)\mt $
\IF{$\|\mathbf{Z}(t)-\widetilde{\mr}(t)\mt\|_F\leq \epsilon$ or $t=t_{\text{max}}$}
\STATE \textbf{return} $\{\mathbf{A}_j(t+1)\}_{j=1}^U$
\ELSE
\STATE $t\leftarrow t+1$
\STATE \textbf{return to} \text{Step} $2$
\ENDIF
\end{algorithmic}
\end{algorithm}

\vspace{-2mm}
We factorize $\bf Z$ as $\mathbf{Z}=\mathbf{X}\mathbf{Y}$, where 
$\mathbf{X} \in \mathbb{R}^{(\sum_jV_j)\times r}$, and 
$\mathbf{Y} \in \mathbb{R}^{r\times D}$. 
The optimization problem in \eqref{opt1} is not convex in $\mathbf{X}$, $\mathbf{Y}$ 
and $\mathbf{A}$ simultaneously; however, it is convex in $\mathbf{X}$ (or $\mathbf{Y}$ 
or $\mathbf{A}$) when the rest of the optimization variables are fixed. In fact, here,
each of the sub-problems can be solved in a closed form. Note that, $\bf Z$ is a 
rank $r$ approximation of $\tmr\mt$ (with an error of $\epsilon$,
i.e., $\|\mz-\tmr\mt\|_F\leq \epsilon$; index codes over $\mathbb R$ enable us to obtain 
such a rank $r$ approximation). The steps in solving this optimization 
problem are listed in Algorithm \ref{altmin}. The alternating minimization method 
is guaranteed to converge to a locally optimum solution for a sufficiently large 
number of iterations \cite{Fazel}.

Let $\widetilde{\mathbf{Z}}$ be the matrix formed by choosing the $L^*$ linearly 
independent rows of ${\bf Z}$. Now, we set $\mc\mt=\widetilde{\mathbf{Z}}$. 
At every transmission instant, if the low-dimensional vector $\vw$ is
available at the DS, then the matrix $\mc\mt$ can be used for index coding to 
generate $\vy=\mc\mt\vw$, else the matrix $\mc\mt\mt^{\dagger}$ is used (since
$\mc\mt\mt^{\dagger}\vx=\mc\mt\mt^{\dagger}\mt\vw=\mc\mt\vw=\mc\vx=\vy$).


\vspace{-2mm}
\subsection{Decoding Error Analysis}
\begin{theorem}
For an index code constructed using the proposed algorithm such that 
$\|\mz-\tmr\mt\|_F\leq \epsilon$, the decoding error is bounded above by $\epsilon$. 
\end{theorem}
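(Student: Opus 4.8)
The plan is to trace the decoding error of an arbitrary user $j$ through the encoder/decoder pair produced by Algorithm~\ref{altmin} and reduce it to the residual $\|\mz-\tmr\mt\|_F$ that the algorithm is designed to control. Recall that the index code is fixed so that $\mc\mt=\widetilde{\mz}$, where $\widetilde{\mz}$ collects $L^*$ linearly independent rows of the (rank $L^*$) matrix $\mz$, and that the transmitted vector is $\vy=\mc\mt\vw$, equivalently $\mc\mt\mt^{\dagger}\vx$ since $\mt^{\dagger}\mt=\mi$. Writing $\md_j=[\ma_j\ \mb_j]$, using $\vx=\mt\vw$, and setting $\widetilde{\mr}_j\triangleq\mr_j-\ma_j\ms_j$, substitution into the decoding map gives
\[
\md_j\begin{bmatrix}\ms_j\vx\\ \vy\end{bmatrix}-\vx_{\mathcal{R}_j}
=\big(\ma_j\ms_j\mt+\mb_j\mc\mt-\mr_j\mt\big)\vw
=\big(\mb_j\mc\mt-\widetilde{\mr}_j\mt\big)\vw .
\]

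First I would pin down the decoder block $\mb_j$. Since $\rank(\mz)=L^*$ and $\widetilde{\mz}$ consists of $L^*$ linearly independent rows of $\mz$, the row space of $\widetilde{\mz}=\mc\mt$ coincides with that of $\mz$; in particular every row of the block $\mz_j$ lies in $\operatorname{row}(\mc\mt)$. Taking $\mb_j=\mz_j(\mc\mt)^{\dagger}$ then yields $\mb_j\mc\mt=\mz_j\widetilde{\mz}^{\dagger}\widetilde{\mz}=\mz_j$, because $\widetilde{\mz}^{\dagger}\widetilde{\mz}$ is the orthogonal projector onto that row space and the rows of $\mz_j$ are fixed by it. This is the decoder implicitly associated with the constructed code, and it collapses the error expression to $(\mz_j-\widetilde{\mr}_j\mt)\vw$.

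The remaining step is routine norm bookkeeping. Stacking over $j$ gives $\|\mz-\tmr\mt\|_F^2=\sum_{j}\|\mz_j-\widetilde{\mr}_j\mt\|_F^2$, so each block obeys $\|\mz_j-\widetilde{\mr}_j\mt\|_F\le\|\mz-\tmr\mt\|_F\le\epsilon$ under the hypothesis. By submultiplicativity of the induced norm together with $\|\cdot\|_2\le\|\cdot\|_F$,
\[
\Big\|\md_j\begin{bmatrix}\ms_j\vx\\ \vy\end{bmatrix}-\vx_{\mathcal{R}_j}\Big\|
=\big\|(\mz_j-\widetilde{\mr}_j\mt)\vw\big\|
\le\|\mz_j-\widetilde{\mr}_j\mt\|_F\,\|\vw\|
\le\epsilon\,\|\vw\| ,
\]
which under the normalization $\|\vw\|\le 1$ (equivalently $\|\vx\|\le 1$ for an orthonormal basis $\mt$) gives the stated bound $\epsilon$; in general the decoding error scales linearly with $\|\vw\|$, and I would state the normalization as a standing convention.

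I expect the main obstacle to be the second step, namely the exact identity $\mb_j\mc\mt=\mz_j$. It rests on $\rank(\mz)=L^*$ holding at termination, so that passing from $\mz$ to its $L^*$ linearly independent rows $\widetilde{\mz}$ loses no row-space information — precisely the feasibility condition under which $r=L^*$ is selected. I would therefore record as a hypothesis (already implicit in the theorem's phrase ``$\|\mz-\tmr\mt\|_F\le\epsilon$'') that the algorithm returns in the convergent branch rather than at $t=t_{\max}$ with a larger residual; once that is granted, the rest follows as above.
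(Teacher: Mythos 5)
Your proposal is correct and follows essentially the same route as the paper: both arguments choose the decoder block so that $\mb\widetilde{\mz}=\mz$ (you via $\mb_j=\mz_j\widetilde{\mz}^{\dagger}$, the paper via $\mb=[\mi_L,\ \mg^T]^T$ with $\mg=\bar{\mz}\widetilde{\mz}^{\dagger}$, which are the same construction up to row ordering), and then bound the residual $(\mz-\tmr\mt)\vw$ by $\|\mz-\tmr\mt\|_F\|\vw\|_2\leq\epsilon$ under the same normalization $\|\vw\|_2\leq 1$.
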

\hspace{-3.5mm}{\em Proof.}
Let $\mr\hat{\vx}$ be the vector decoded at the receivers. Then, the decoding error is 
\begin{equation}
\|\mr\vx-\mr\hat{\vx}\|=\|\mr\mt\vw-(\ma\ms\mt\vw+\mb\mc\mt\vw)\|,
\label{eqn:derr}
\end{equation} 
where the values of the matrices $\ma$ and $\mc\mt($ $=\widetilde{\bf Z})$ are obtained 
from Algorithm~\ref{altmin}. We choose $\mb$ such that ${\bf Z=B\widetilde{Z}}$. 
This is possible due to the following reason. Without loss of generality, we can express 
$\mz$ as $\mz=[\widetilde{\mz}^T,\, \bar{\mz}^T]^T$, where $\bar{\mz}$ is the matrix of 
$\sum_jV_j-L$ linearly dependent rows of $\mz$. Therefore, the rows of $\bar{\mz}$
are in the $\spn($rows of $\widetilde{\mz})$, i.e., $\bar{\mz}=\mg\widetilde{\mz}$
for some matrix $\mg\in\mathbb{R}^{\sum_jV_j-L\times L}$. Hence, by choosing $\mb$ as
$\mb=[\mi_L,\, \mg^T]^T$, we have $\mb\widetilde{\mz}=\mz$. Further, without loss of
generality, we assume $\|\vw\|_2\leq 1$.

Now, from \eqref{eqn:derr}, the decoding error can be bounded as
\[
\|\mr\vx-\mr\hat{\vx}\|\leq\|\tmr\mt-\mb\widetilde{\mathbf{Z}}\|_F\|\vw\|_2\leq\epsilon.
\quad\quad\quad\qedsymbol\]
{\em Remark 1}: The matrices $\widetilde{\mz}$ and $\bar{\mz}$ can be easily obtained from 
$\mz$ using one of the many commonly known techniques such as using QR decomposition, and 
$\mg=\bar{\mz}\widetilde{\mz}^\dagger$. \\
{\em Remark 2}: In the proposed decoding strategy, the users need not be aware of the
subspace matrix $\mt$ for decoding. Using the matrix $\md_j$, each user can directly 
decode $\vx_{\mathcal{R}_j}$.

\vspace{-3mm}
\section{Numerical Results}
\label{sec5}
\vspace{-1mm}
Here, we present numerical results for the proposed index code construction 
algorithm and analyze its performance.
\vspace{-4mm}
\subsection{Comparison}
\label{sec5a}
\vspace{-1mm} {
First, we consider the index coding problem with USI from~\cite{iitc}\footnote{The 
algorithm in~\cite{iitc} can solve only the subspace-unaware index coding problem with 
USI (conventional matrix completion problem) which is a special case of the problem we 
consider in this paper.}, where $U=4$, $\mathcal{R}_i=\{i\}$ for $i=1,2,3,4$, and

\vspace{-2mm}
{\small
\[\ms_1=\begin{bmatrix}0&1&0&0\\ 0&0&1&0\end{bmatrix},\quad
\ms_2=\begin{bmatrix}1&0&0&0\\ 0&0&1&0\end{bmatrix},\]
\[\ms_3=\begin{bmatrix}0&1&0&0\\ 0&0&0&1\end{bmatrix},\quad
\ms_4=\begin{bmatrix}1&0&0&0\end{bmatrix}.\]}

\vspace{-4mm}
Using our proposed algorithm, the $\tmr$ matrix obtained for $\epsilon=10^{-10}$ is 

\vspace{-3mm}
{\small
\[ \begin{bmatrix}1& -0.9122099& -1.0733032& 0\\
-1.0962388& 1& 1.1765966& 0\\
0& 0.8499089& 1& -1.0952999\\
-0.8506375& 0& 0& 1
\end{bmatrix}.\]}
The rank of this matrix is $2$, which is the optimal index code length for this 
problem as given in~\cite{iitc}. We obtain the index code for this problem by choosing 
the linearly independent rows in the above matrix. The index code thus obtained is

\vspace{-2mm}
{\small
\[ \begin{bmatrix}1& -0.9122099& -1.0733032& 0\\
-0.8506375& 0& 0& 1
\end{bmatrix}.\]}
For example, when $\vx=[1, 1, -1, 2]^T$ is the source-data, the reconstructed values 
at the users, from the designed index code were $[1, 0.9999999, -1, 2]$; this gives a 
decoding error of $\|\vx-\hat{\vx}\|=4.02\times 10^{-14}$. For the same problem, consider 
the source-data to be low-dimensional with 
{\scriptsize
$\mt=\begin{bmatrix}1& -2& 1& 1\\
1& 1& -1& 2
\end{bmatrix}^T$}. Now, $\vx=[1, 1, -1, 2]^T=\mt[0, 1]^T$. For this linear subspace,
we get a subspace-aware index code of length 1 given by the matrix 
$[0, -0.3936923, 0.2644723, -0.1412844]$, and a corresponding decoding error of 
$8.34\times 10^{-14}$ at the users.

\begin{figure}
\centering
\includegraphics[width=3.5in, height=2.5in]{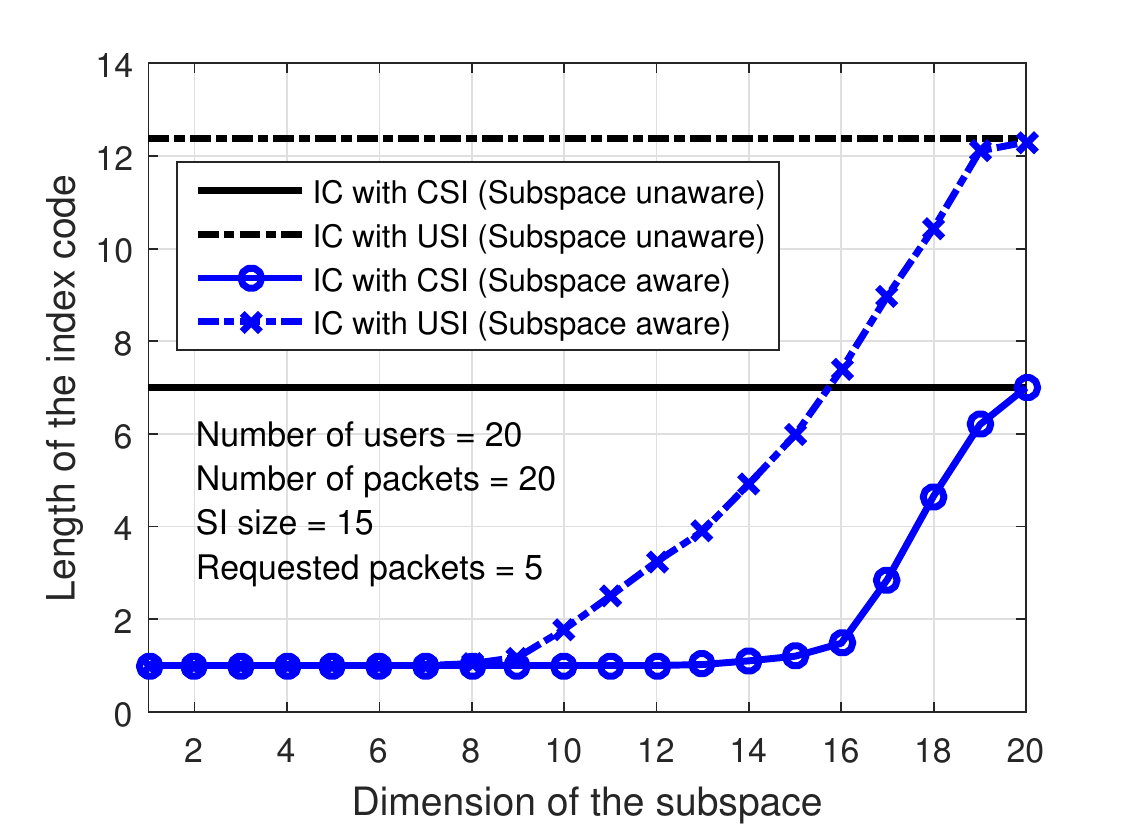} 
\caption{Index code length obtained using the proposed algorithm for different subspace dimensions.}
\label{fig:sim1}
\end{figure}

\begin{figure*}
    \centering
    \subfigure[]{
        \includegraphics[width=0.45\textwidth]{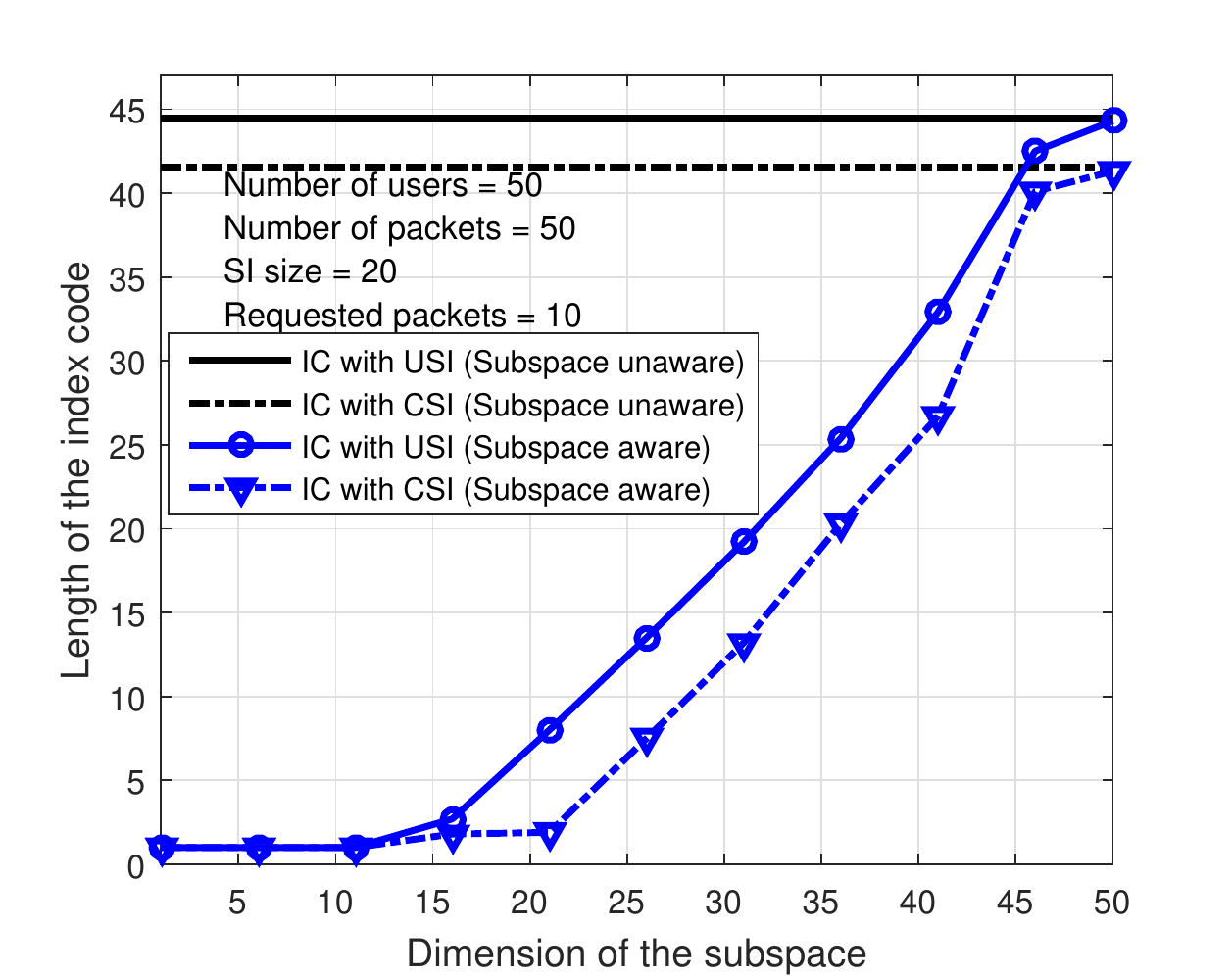}
        } \hspace{4mm}
    \subfigure[]{
        \includegraphics[width=0.45\textwidth]{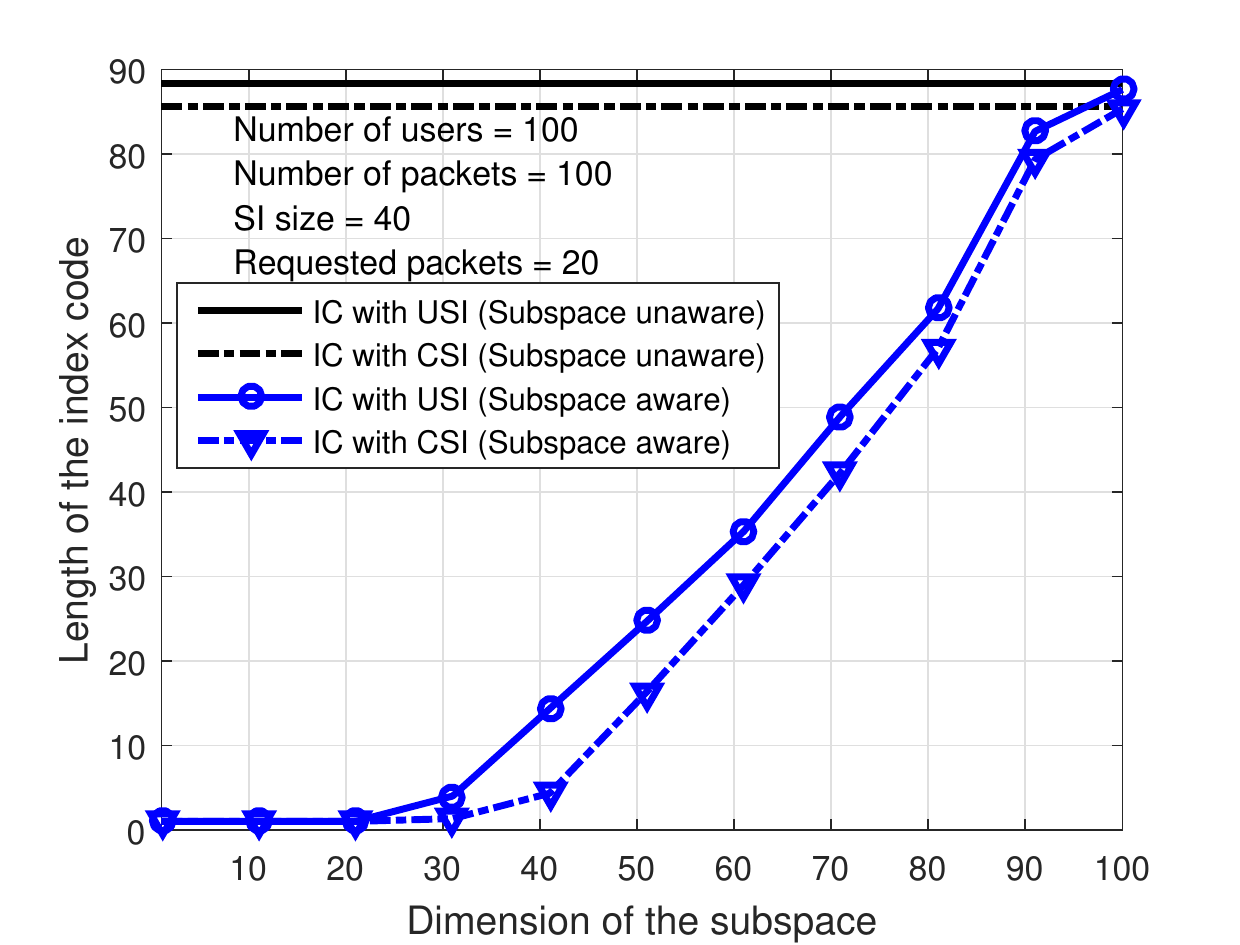}
        }        
    \caption{Index code length obtained using the proposed algorithm for different system parameters.}
    \label{fig:sim3}
\end{figure*}

\begin{figure}
\centering
\includegraphics[width=3.5in, height=2.5in]{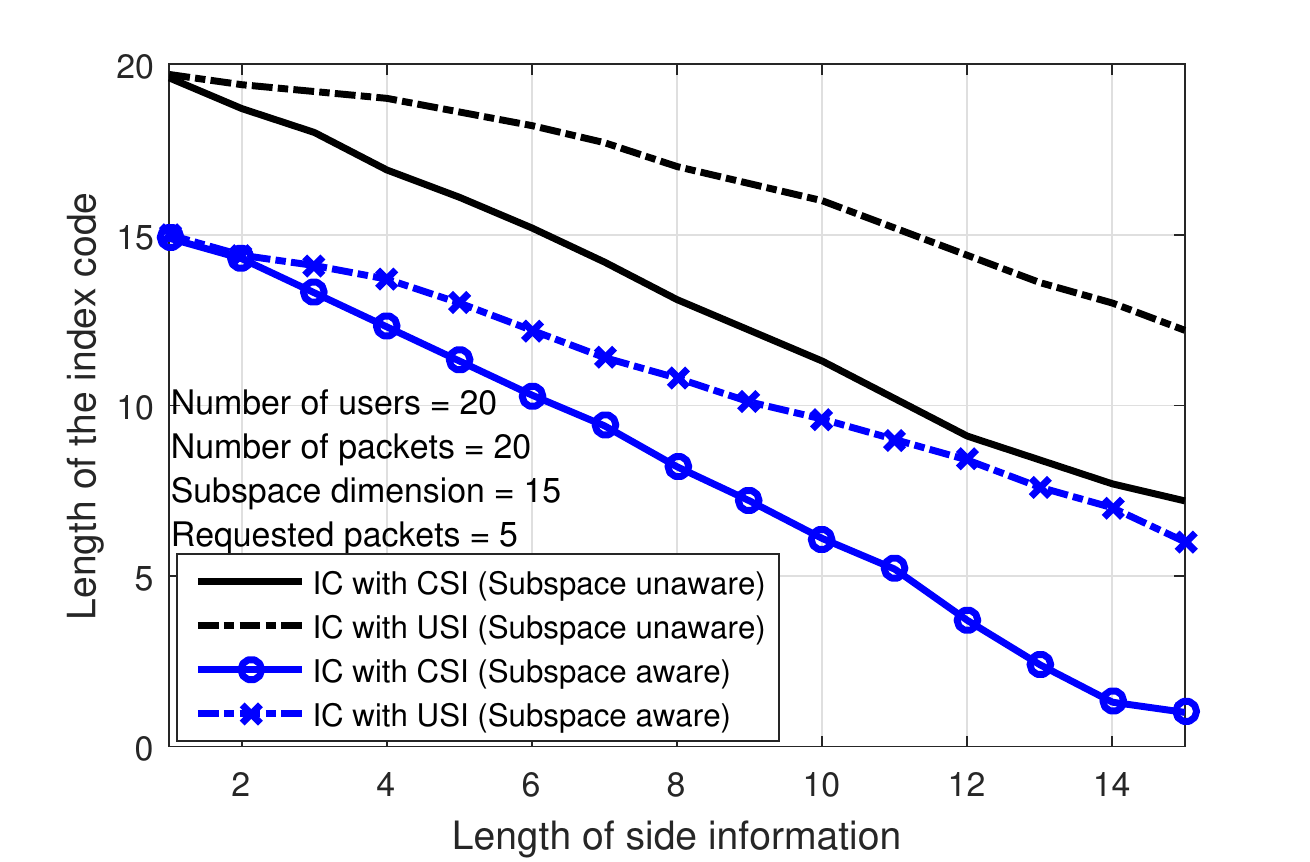} 
\caption{Index code length obtained using the proposed algorithm for different side information sizes.}
\label{fig:sim2}
\end{figure}

\vspace{-4mm}
\subsection{Simulation results}
\vspace{-1mm}
We simulated a simple multicast video-streaming scenario with $N=20$ source-data packets 
and $U=20$ users requesting $|\mathcal{R}_j|=5$ data packets, each. We evaluated the  
index code length averaged over several instances for four different cases -- namely, $(1)$ DS is 
subspace-unaware and the
SI is uncoded, $(2)$ DS is subspace-unaware and the SI is coded, $(3)$ DS is subspace-aware 
and SI is uncoded, and $(4)$ DS is subspace-aware and the SI is coded.
For fair comparison, we consider the same requirement matrix $\mr$ for all the cases.

In Figure \ref{fig:sim1}, we plot the average index code length obtained using our proposed
algorithm for different subspace dimensions fixing the SI length at each user to be 
$M_j=15$. We see that when the source-data is low-dimensional, the average index code 
lengths obtained for the subspace-aware cases are significantly less than that of the 
subspace-unaware cases. The average index code lengths for the subspace-unaware cases 
are 7 (CSI) and 12.4 (USI). Whereas, in subspace-aware case, for $D<9$, the average
index code length is 1.1. Therefore, subspace-aware index codes reduce the transmissions
required by about 91\% for the USI case and by about 85\% for the CSI case\footnote{ As 
the number of packets and users increase, the difference between the average index code length 
for the CSI and that of the USI case decreases.}. Also, for $9\leq D<20$, we observe that
the subspace-aware index codes consistently outperform the subspace-unaware index codes 
by considerable margin.

Furthermore, from Fig. \ref{fig:sim3}, we can see that even when the number of packets are $50$ or $100$, the
subspace-aware index code outperforms the subspace-unaware index codes. For example, when the
subspace dimension is half that of the number of packets (i.e., $D=25$ when $N=50$, and $D=50$
when $N=100$), subspace-aware index codes have code lengths that are $70$\% smaller compared to
that of subspace-unaware index codes for uncoded side information, and subspace-aware index 
codes have $81$\% advantage over subspace-unaware index codes for coded side information. Thus,
we can observe that irrespective of the number of packets, the subspace-aware index codes
provide significant throughput gains over the subspace-unaware index codes.

In Figure \ref{fig:sim2}, we evaluate the performance of the proposed algorithm for varying 
SI lengths ($M_j$) and fixing the subspace dimension at $D=15$. As before, we can see that 
the subspace-aware index codes have significant throughput gains over the subspace-unaware 
index codes in both the USI and CSI cases. For instance, when $M_j=10$, the subspace-aware 
cases (both USI and CSI) have an average index code length that is at least 30\% smaller than 
that of the subspace-unaware cases.
}

\vspace{-1.5mm}
\section{Conclusion}
\vspace{-1mm}
In this paper, we studied a generalization of the index coding problem that exploits 
source-data's structure to improve the system-throughput. We analytically characterized 
the length of the subspace-aware index codes and proposed an algorithm to obtain near 
optimal index codes. We showed that this approach significantly outperforms the conventional 
approaches when the source-data belongs to a low-dimensional subspace. 
{Index coding for the case when the source-data belongs to a non-linear subspace 
or manifold is an interesting direction for future research. Further, network codes can also
be constructed using the proposed algorithm, once the network coding problem is converted to an
equivalent index coding problem \cite{Effros}}.


\vspace{-1.5mm}


\begin{thebibliography}{1}
\vspace{-1mm}

\bibitem{Birk}
Y. Birk and T. Kol, ``Coding on demand by an informed source (ISCOD)
for efficient broadcast of different supplemental data to caching clients,''
{\em IEEE Trans. Inform. Theory}, vol. 52, no. 6, pp. 2825-2830, Jun. 2006.

\bibitem{Bar}
Z. Bar-Yossef, Y. Birk, T. Jayram, and T. Kol, ``Index coding with side information,''
{\em IEEE Symposium on Foundations of Computer Science (FOCS)}, pp. 197-206, 2006.

\bibitem{cache}
M. Ji, A.M. Tulino, J. Llorca, and G. Caire, ``Caching and coded multicasting: 
Multiple groupcast index coding,'' {\em IEEE GlobalSIP}, pp. 881-885, Dec. 2014.

\bibitem{Tan}
 V. Y. F. Tan, L. Balzano, and S. C. Draper, ``Rank minimization over finite 
 fields: Fundamental limits and coding-theoretic interpretations,''
{\em IEEE Trans. Inform. Theory}, vol. 58, no. 4, pp. 2018-2039, Apr 2011.

\bibitem{Shum} 
 K. W. Shum, M. Dai, and C. W. Sung, ``Broadcasting with coded side information,''
{\em in Proc. IEEE PIMRC}, 2012, pp. 89-94, Sep. 2012.

\bibitem{dimakis}
N. Lee, A. G. Dimakis, and R. W. Heath, ``Index coding with coded side-information,''
{\em IEEE Commun. Letters}, vol. 19, no. 3, pp. 319-322, 2015.

\bibitem{iitc}
X. Huang and S. El Rouayheb, ``Index coding and network coding via rank 
minimization,'' {\em Proc. of IEEE Information Theory Workshop}, pp. 14-18,
2015.


\bibitem{real2}
Y. Shi and B. Mishra, ``A Sparse and Low-Rank Optimization Framework for Index 
Coding via Riemannian Optimization," {\em arXiv:1604.04325}, 2016.

\bibitem{Effros} 
M. Effros, S. El Rouayheb, and M. Langberg, ``An equivalence between
network coding and index coding,'' {\em IEEE Trans. Inform.
Theory}, vol. 61, no. 5, pp. 2478-2487, May 2015.

\bibitem{ncreal1}
M. Schwartz and M. Medard, ``Quasi-linear network coding," {\em IEEE International 
Symposium on Network Coding}, 2014.

\bibitem{ncreal2}
S. Shintre, S Katti, S Jaggi, B. K. Dey, D. Katabi, and M. Medard, `` Real and 
complex network codes: promises and challenges," {\em IEEE Workshop on Network Coding, 
Theory and Applications}, 2008.


\bibitem{ss1}
E. Elhamifar, and R. Vidal, ``Sparse subspace clustering: Algorithm, theory, and 
applications," {\em IEEE Trans. on Pattern Analysis and Machine Intelligence}, 
vol. 35, no. 11, pp. 2765-2781, 2013.

\bibitem{ss2}
I. Tosic, and P. Frossard, ``Dictionary learning," {\em IEEE Signal Processing 
Magazine}, vol. 28, no. 2, pp. 27-38, 2011.

\bibitem{ss3}
M. Mardani, G. Mateos, and G. B. Giannakis, ``Subspace learning and imputation for 
streaming big data matrices and tensors," {\em IEEE Transactions on Signal Processing}, 
vol. 63, no. 10, pp. 2663-2677, 2015.


\bibitem{icml}
S. Ubaru, A. Mazumdar, and Y. Saad, ``Low rank approximation using error correcting 
coding matrices,'' {\em International Conference on Machine Learning}, pp. 702-710, 
2015.

\bibitem{Lee}
K. Lee, M. Lam, R. Pedarsani, D. Papailiopoulos, and K. Ramchandran, ``Speeding up 
distributed machine learning using codes,'' {\em NIPS: Workshop on Machine Learning 
Systems}, Dec. 2015.


\bibitem{spectrum}
Y. Ma, Y. Gao, Y. C. Liang, and S. Cui, ``Reliable and efficient sub-Nyquist wideband 
spectrum sensing in cooperative cognitive radio networks," {\em IEEE Journ. on Sel. 
Areas in Commun.} vol. 34, no. 10, pp. 2750-2762, Oct. 2016.

\bibitem{sbs1} 
A. Georghiades, P. N. Belhumeur, and D. J. Kriegman, ``Yale face database," {\em Center for 
computational Vision and Control at Yale University}, 1997, http://cvc.yale.edu/projects/yalefaces/yalefa.

\bibitem{sbs2} 
Q. Ke, and T. Kanade, ``A subspace approach to layer extraction," {\em Proc. of IEEE Computer Vision 
and Pattern Recognition}, 2001.


\bibitem{Fazel}
 M. Fazel, H. Hindi, and S. Boyd, ``Rank minimization and applications
in system theory,'' {\em American Control Conference}, 2004. 

\end{thebibliography}
\end{document}